\newtheorem{definition}{Definition}
\newtheorem{proposition}[definition]{Proposition}
\def\squareforqed{\hbox{\rlap{$\sqcap$}$\sqcup$}}
\def\qed{\ifmmode\squareforqed\else{\unskip\nobreak\hfil
\penalty50\hskip1em\null\nobreak\hfil\squareforqed
\parfillskip=0pt\finalhyphendemerits=0\endgraf}\fi}
\def\endenv{\ifmmode\;\else{\unskip\nobreak\hfil
\penalty50\hskip1em\null\nobreak\hfil\;
\parfillskip=0pt\finalhyphendemerits=0\endgraf}\fi}
\newenvironment{proof}{\noindent \textbf{{Proof~} }}{\qed}
\mathchardef\ordinarycolon\mathcode`\:
\def\vcentcolon{\mathrel{\mathop\ordinarycolon}}
\newcommand{\nc}{\newcommand}
\nc{\rnc}{\renewcommand}
\nc{\beg}{\begin{equation}}
\nc{\eeq}{{\end{equation}}}
\nc{\beqa}{\begin{eqnarray}}
\nc{\eeqa}{\end{eqnarray}}
\nc{\lbar}[1]{\overline{#1}}
\nc{\bra}[1]{\langle#1|}
\nc{\ket}[1]{|#1\rangle}
\nc{\ketbra}[2]{|#1\rangle\!\langle#2|}
\nc{\braket}[2]{\langle#1|#2\rangle}
\nc{\proj}[1]{| #1\rangle\!\langle #1 |}
\nc{\avg}[1]{\langle#1\rangle}
\nc{\Rank}{\operatorname{Rank}}
\nc{\smfrac}[2]{\mbox{$\frac{#1}{#2}$}}
\nc{\tr}{\operatorname{Tr}}
\nc{\ox}{\otimes}
\nc{\dg}{\dagger}
\nc{\dn}{\downarrow}
\nc{\cA}{{\cal A}}
\nc{\cB}{{\cal B}}
\nc{\cC}{{\cal C}}
\nc{\cD}{{\cal D}}
\nc{\cE}{{\cal E}}
\nc{\cF}{{\cal F}}
\nc{\cG}{{\cal G}}
\nc{\cH}{{\cal H}}
\nc{\cI}{{\cal I}}
\nc{\cJ}{{\cal J}}
\nc{\cK}{{\cal K}}
\nc{\cL}{{\cal L}}
\nc{\cM}{{\cal M}}
\nc{\cN}{{\cal N}}
\nc{\cO}{{\cal O}}
\nc{\cP}{{\cal P}}
\nc{\cQ}{{\cal Q}}
\nc{\cR}{{\cal R}}
\nc{\cS}{{\cal S}}
\nc{\cT}{{\cal T}}
\nc{\cX}{{\cal X}}
\nc{\cY}{{\cal Y}}
\nc{\cZ}{{\cal Z}}
\nc{\cW}{{\cal W}}
\nc{\csupp}{{\operatorname{csupp}}}
\nc{\qsupp}{{\operatorname{qsupp}}}
\nc{\var}{{\operatorname{var}}}
\nc{\rar}{\rightarrow}
\nc{\lrar}{\longrightarrow}
\nc{\polylog}{{\operatorname{polylog}}}
\nc{\wt}{{\operatorname{wt}}}
\nc{\av}[1]{{\left\langle {#1} \right\rangle}}
\nc{\supp}{{\operatorname{supp}}}
\def\a{\alpha}
\def\G{\Gamma}
\def\O{\Omega}
\nc{\RR}{{{\mathbb R}}}
\nc{\CC}{{{\mathbb C}}}
\nc{\FF}{{{\mathbb F}}}
\nc{\NN}{{{\mathbb N}}}
\nc{\ZZ}{{{\mathbb Z}}}
\nc{\PP}{{{\mathbb P}}}
\nc{\QQ}{{{\mathbb Q}}}
\nc{\UU}{{{\mathbb U}}}
\nc{\EE}{{{\mathbb E}}}
\nc{\id}{{\operatorname{id}}}
\nc{\CHSH}{{\operatorname{CHSH}}}
\nc{\be}{\begin{equation}}
\nc{\ee}{{\end{equation}}}
\nc{\bea}{\begin{eqnarray}}
\nc{\eea}{\end{eqnarray}}
\nc{\Hom}[2]{\mbox{Hom}(\CC^{#1},\CC^{#2})}
\nc{\rU}{\mbox{U}}
\nc{\ob}[1]{#1}
\nc{\SEP}{{\text{SEP}}}
\nc{\NS}{{\text{NS}}}
\nc{\LOCC}{{\text{LOCC}}}
\nc{\PPT}{{\text{PPT}}}
\nc{\EXT}{{\text{EXT}}}
\nc{\Sym}{{\operatorname{Sym}}}
\nc{\ERLO}{{E_{\text{r,LO}}}}
\nc{\ERLOCC}{{E_{\text{r,LOCC}}}}
\nc{\ERPPT}{{E_{\text{r,PPT}}}}
\nc{\ERLOCCinfty}{{E^{\infty}_{\text{r,LOCC}}}}
\nc{\Aram}{{\operatorname{\sf A}}}
\begin{document}
\title{Irreversibility of Asymptotic Entanglement Manipulation Under Quantum Operations Completely Preserving Positivity of Partial
Transpose}
\author{Xin Wang$^{1}$}
\email{xin.wang-8@student.uts.edu.au}
\author{Runyao Duan$^{1,2}$}
\email{runyao.duan@uts.edu.au}

\affiliation{$^1$Centre for Quantum Software and Information, Faculty of Engineering and Information Technology, University of Technology Sydney, NSW 2007, Australia}
\affiliation{$^2$UTS-AMSS Joint Research Laboratory for Quantum Computation and Quantum Information Processing, Academy of Mathematics and Systems Science, Chinese Academy of Sciences, Beijing 100190, China}

\begin{abstract}
We demonstrate the irreversibility of asymptotic entanglement manipulation under quantum operations that completely preserve the positivity of partial transpose (PPT), resolving a major open problem in quantum information theory. Our key tool is a new efficiently computable additive lower bound for the asymptotic relative entropy of entanglement with respect to PPT states, which can be used to evaluate the entanglement cost under local operations and classical communication (LOCC). We find that for any rank-two mixed state supporting on the $3\otimes3$ antisymmetric subspace, the amount of distillable entanglement by PPT operations is strictly smaller than one entanglement bit (ebit) while its entanglement cost under PPT operations is exactly one ebit. As byproduct, we find that for this class of states, both the Rains' bound and its regularization, are strictly less than the asymptotic relative entropy of entanglement. So, in general, there is no unique entanglement measure for the manipulation of entanglement by PPT operations. We further show a computable sufficient condition for the irreversibility of entanglement distillation by LOCC (or PPT) operations. 
\end{abstract}
\maketitle
\textbf{Introduction:} 
In quantum information science,
the resource theory of entanglement studies the interconvertibilities of entanglement under restricted classes of allowed operations. The irreversibility is crucial to this entanglement resource theory and it was sometimes argued to be the difference between entanglement and thermodynamics, as the Carnot cycle is reversible. When local operations and classical communication (LOCC) is available,
the manipulation of entanglement is irreversible in the finite-copy regime. More precisely, the amount of pure entanglement that can be distilled from a finite number of copies of a state $\rho$ is usually strictly smaller than the amount of pure entanglement needed to prepare the same number of copies of $\rho$~\cite{Bennett1996c}. Surprisingly, in the asymptotic limit of an arbitrarily large number of copies
of the bipartite pure states, this process is shown to be reversible~\cite{Bennett1996b}. 
But for mixed states, this asymptotic reversibility does not hold anymore~\cite{Vidal2001,Vidal2001a,Vidal2002b,Vollbrecht2004,Cornelio2011}. In particular, one requires a positive rate of pure states to generate the bound entanglement by LOCC~\cite{Vidal2001,Yang2005}, while it is well known that no pure state can be distilled from it~\cite{Horodecki1998}.

Various approaches have been considered to enlarge the class of operations to ensure reversible interconversion of entanglement in the asymptotic regime. A natural candidate is the class of quantum operations that completely preserve positivity of partial transpose (PPT)~\cite{Rains2001}, which include all quantum operations that can be implemented by LOCC. A remarkable result is that any state with a nonpositive partial transpose (NPT) is distillable under this class of operations~\cite{Eggeling2001}. This suggests the possibility of reversibility under PPT operations, and there are examples of mixed states which can be reversibly converted into pure states in the asymptotic setting, e.g. the class of antisymmetric states of arbitrary dimension~\cite{Audenaert2003}.  It is noteworthy that for tripartite states, Ishizaka and Plenio~\cite{Ishizaka2005}  showed that asymptotic entanglement manipulation is irreversible. However, for bipartite states, the reversibility under PPT operations remained unknown so far since there were no further examples.  Recently, a reversible theory of entanglement considering all asymptotically non-entangling transformations was studied in Refs.~\cite{Brandao2008,Brandao2010} and the unique entanglement measure is identified as the asymptotic  relative entropy of entanglement. A more general reversible framework for quantum resource theories was recently introduced in Ref.~\cite{Brandao2015}.

When the unit of pure entanglement is set to be the standard $2\otimes 2$ Bell pair $1/\sqrt{2}(\ket{00}+\ket{11})$, or entanglement bit (ebit),  two fundamental ways of entanglement manipulation are well known, namely, entanglement distillation and entanglement dilution~\cite{Bennett1996b,Bennett1996c}. These two tasks also naturally raise two fundamental entanglement measures: distillable entanglement and entanglement cost~\cite{Bennett1996c}. To be specific, distillable entanglement is the highest rate at which one can obtain Bell pairs from the given state under allowed operations, while entanglement cost is the lowest rate for converting Bell pairs to the given state. It is worth noting that if one can show a gap between the distillable entanglement and entanglement cost under PPT operations, then it will lead to the  irreversibility of asymptotic entanglement manipulation. However, this problem is still very hard since for general mixed states it is highly nontrivial to evaluate these two measures both of which are given by a limiting procedure. 

In this Letter, we demonstrate that irreversibility still exists in the asymptotic entanglement manipulation under PPT operations, which resolves a long-standing open problem in quantum information theory \cite{Audenaert2003,Horodecki2002,M.Plenio}. Our approach is to show a gap between the regularized Rains' bound and the asymptotic relative entropy of entanglement \cite{Audenaert2001} with respect to PPT states, which also resolves another open problem in Ref. \cite{Plenio2007}. More precisely, we introduce an additive semidefinite programming (SDP)  lower bound for the asymptotic relative entropy of entanglement with respect to PPT states. With this lower bound, we are able to show that the PPT-entanglement cost of any rank-two state supporting on the $3\ox3$ antisymmetric subspace is exactly one ebit while its PPT-distillable entanglement is strictly smaller than one. As a corollary, we show that there is no unique entanglement measure under PPT operations. This means that entanglement theory under PPT operations differs from thermodynamics, since in the second law of thermodynamics, the entropy uniquely determines whether a state is adiabatically accessible from another. We also give a sufficient condition to efficiently verify the irreversibility of entanglement distillation by LOCC (or PPT) operations. A general class of states are constructed to illustrate this phenomenon, see FIG.~\ref{irr_ppt}.
\begin{figure}[ht]
\centering
\vspace{-0.35cm}
\includegraphics[width=0.40\textwidth]{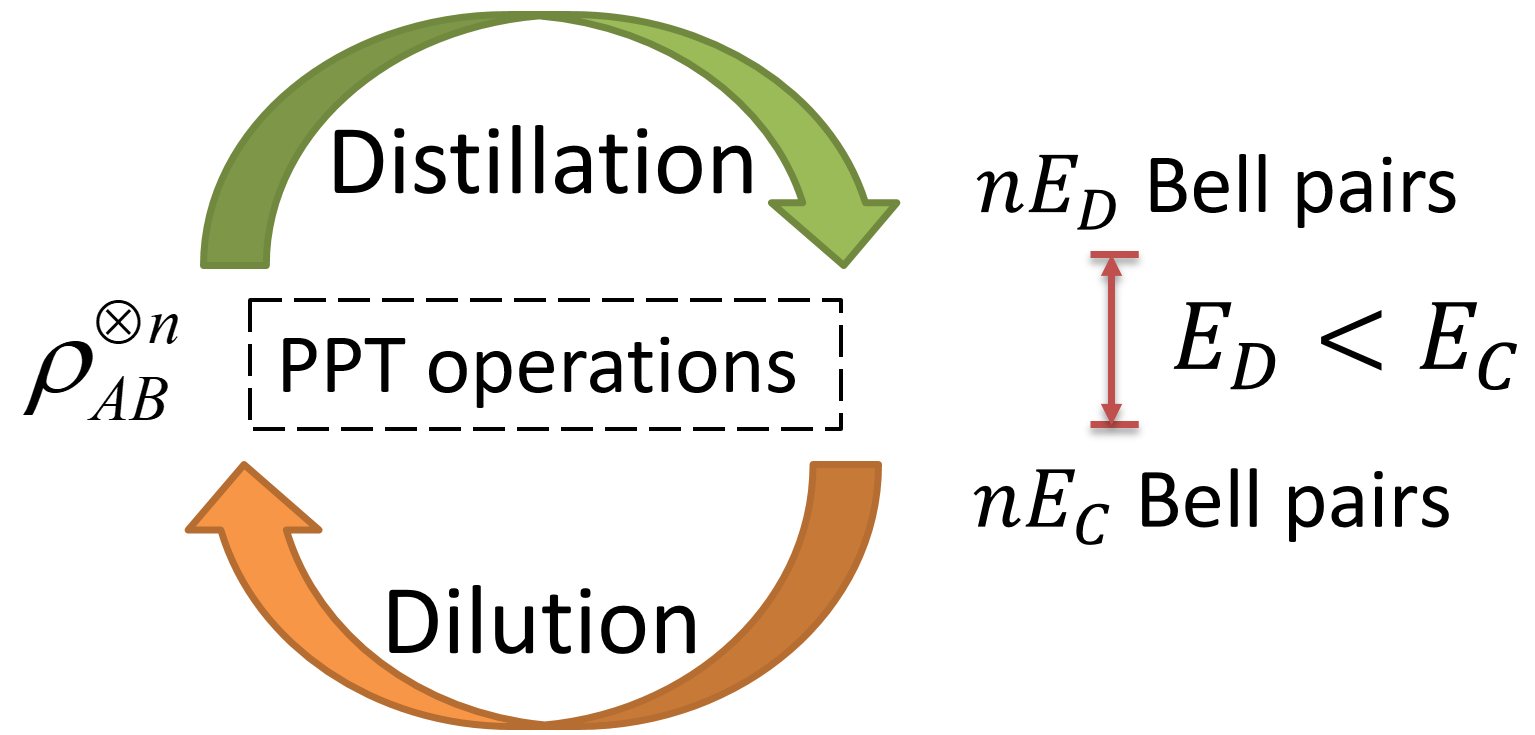}
\caption{The amount of Bell pairs distilled from the state is insufficient to reconstruct the given state under PPT operations in the asymptotic regime.}\label{irr_ppt}
\vspace{-0.35cm}
\end{figure}

Before we present our main results, let us review some notations and preliminaries. 
We will use symbols such as $A$ (or $A'$) and $B$ (or $B'$) to denote (finite-dimensional) Hilbert spaces associated with Alice and Bob, respectively. The set of linear operators over $A$ is denoted by $\cL(A)$. For a linear operator $R$ over a Hilbert space, we define $|R|=\sqrt{R^\dagger R}$ and the trace norm $\|R\|_1=\tr |R|$, where $R^\dagger$ is the Hermitian conjugate of $R$. The operator norm $\|R\|_\infty$ is defined as the maximum eigenvalue of $|R|$.  A deterministic quantum operation $\cN$ from $A'$ to $B$ is simply a completely positive and trace-preserving (CPTP) linear map from $\cL(A')$ to $\cL(B)$.  A positive semidefinite operator $E_{AB} \in \cL(A\ox B)$ is said to be PPT if $E_{AB}^{T_{B}}\geq 0$, where $T_B$ means the partial transpose over the system $B$, i.e., 
$(\ketbra{i_Aj_B}{k_Al_B})^{T_{B}}=\ketbra{i_Al_B}{k_Aj_B}$. 

The task of entanglement distillation aims at obtaining maximally entangled states such as Bell pairs from less-entangled bipartite states. Imagine that Alice and Bob share a large supply of identically prepared state, and they want to convert these states to high fidelity Bell pairs using $\O$ operation. (We use $\Omega$ to represent one of \text{LOCC or PPT} through out the paper.) The \emph{distillable entanglement} $E_{D,\O}$ of $\rho$ quantifies the optimal rate $r$ of converting $\rho^{\ox n}$ to  ${rn}$  Bell pairs with an arbitrarily high fidelity in the limit of large $n$. The reverse task is entanglement dilution. At this time, Alice and Bob share a large supply of Bell pairs and they are  to convert $rn$ Bell pairs  to $n$ high fidelity copies of the desired state $\rho^{\ox n}$. The \emph{entanglement cost} $E_{C,\O}$ quantifies the optimal rate $r$ of converting ${rn}$  Bell pairs to $\rho^{\ox n}$  with an arbitrarily high fidelity in the limit of large $n$. 

For simplicity, we denote $E_{D,PPT}$ and $E_{C,PPT}$ as $E_{D}$ and $E_{C}$, respectively.  For entanglement cost, Hayden, Horodecki and Terhal~\cite{Hayden2001} proved that $E_{C,LOCC}$ equals to the regularized entanglement of formation~\cite{Bennett1996c} while the similar result is not true for PPT operations. For distillable entanglement, the best known bound is
the Rains' bound \cite{Rains2001} and it is reformulated in Ref.~\cite{Audenaert2002} as the following convex optimization problem: 
\begin{equation}
R(\rho)=\min S(\rho||\sigma) \text{ s.t. } \ \sigma\ge0, \|\sigma^{T_B}\|_1\le 1.
\end{equation}
In  this formula, $S(\rho||\sigma)=\tr (\rho\log \rho-\rho\log\sigma)$ denotes
the quantum relative entropy, where we take $\log \equiv \log_2$ throughout the paper. The regularized Rains' bound, i.e., $R^{\infty}(\rho)=\inf_{n\geq 1}  R(\rho^{\ox n})/n$, was first introduced in Ref.~\cite{Hayashi2006a}. Very recently we showed that the Rains' bound is not additive even for a class of two-qubit states \cite{Wang2016c}. The regularized Rains' bound is thus a better upper bound for the distillable entanglement.  

The PPT-relative entropy of entanglement (REE)~\cite{Vedral1997, Vedral1998a, Vedral1997a}  is defined by
\begin{align*}
E_{R}(\rho)= \min S(\rho || \sigma) \ \text{ s.t. }\  \sigma, \sigma^{T_B}\ge 0,\tr \sigma=1.
\end{align*}
And the asymptotic PPT-relative entropy of entanglement is given by $E_{R}^{\infty}(\rho)=\inf_{n\geq 1} E_{R}(\rho^{\ox n})/n.$
It was shown in Ref. \cite{Hayashi2006a} that the asymptotic REE is indeed a lower bound to the PPT-entanglement cost.  Then, for a general quantum state $\rho$, it always holds that 
\begin{equation}\label{ED REE EC}
E_D(\rho)\leq R^{\infty}(\rho)\leq E^{\infty}_R(\rho)\leq E_{C}(\rho).
\end{equation}
And it has been open for years whether any of these inequalities could be strict.

The main contribution of this Letter is to show that the second inequality is strict for a class of rank-two states supporting on the $3\otimes 3$ antisymmetric subspace. As the first example, let us consider $\rho_{v}=\frac{1}{2}(\proj{v_1}+\proj{v_2})$ with 
$$
\ket {v_1}={1}/{\sqrt 2}(\ket {01}-\ket{10}), \ket {v_2}={1}/{\sqrt 2}(\ket {02}-\ket{20}).
$$
The projection onto $\supp(\rho_v)$ is $P_v=\proj{v_1}+\proj{v_2}$. In Ref. \cite{Chitambar2009}, Chitambar and one of us showed that this state can be transformed into some $2\otimes 2$ pure entangled state by a suitable separable operation while no finite-round LOCC protocol can do that. Here we show that
\begin{equation}\label{mainresult}
E_{D}(\rho_v)=R^{\infty}(\rho_v)<E_{R}^{\infty}(\rho_v)=E_{C}(\rho_v). 
\end{equation}
That means the asymptotic entanglement manipulation of $\rho_v$ under PPT operations is irreversible, thus resolving a long-standing open problem in quantum information theory~\cite{Audenaert2003,Horodecki2002,M.Plenio}. Furthermore, it also answers another open problem in Ref.~\cite{Plenio2007} by showing a nonzero gap between the regualized Rains' bound and the asymptotic REE of $\rho_v$. The proofs are clear from Propositions \ref{PPT cost} and \ref{PPT distill}  below.

\textbf{An SDP lower bound for  $E_{R}^{\infty}(\rho)$:}
Our key tool is an efficiently computable additive lower bound for the asymptotic REE. In the one-copy case,  we need to do some relaxations of the minimization of $S(\rho||\sigma)$ with respect to PPT states.  Note that the support of a state $\rho$, denoted by $\supp(\rho)$, is a subspace spanned by the eigenvectors of $\rho$ with positive eigenvalues.  Let $D(\rho)=\{\rho': \supp(\rho')\subseteq\supp(\rho)\}$ be the set of quantum states whose supports are contained in that of $\rho$, and let $\G$ be the set of PPT states.  We can first relax the minimization of  $S(\rho||\sigma)$  to the smallest relative entropy distance between $D(\rho)$ and  the set $\G$. See FIG. 2 below for an intuitive illustration of the ideas. 
\begin{figure}[htbp]
\includegraphics[width=0.38\textwidth]{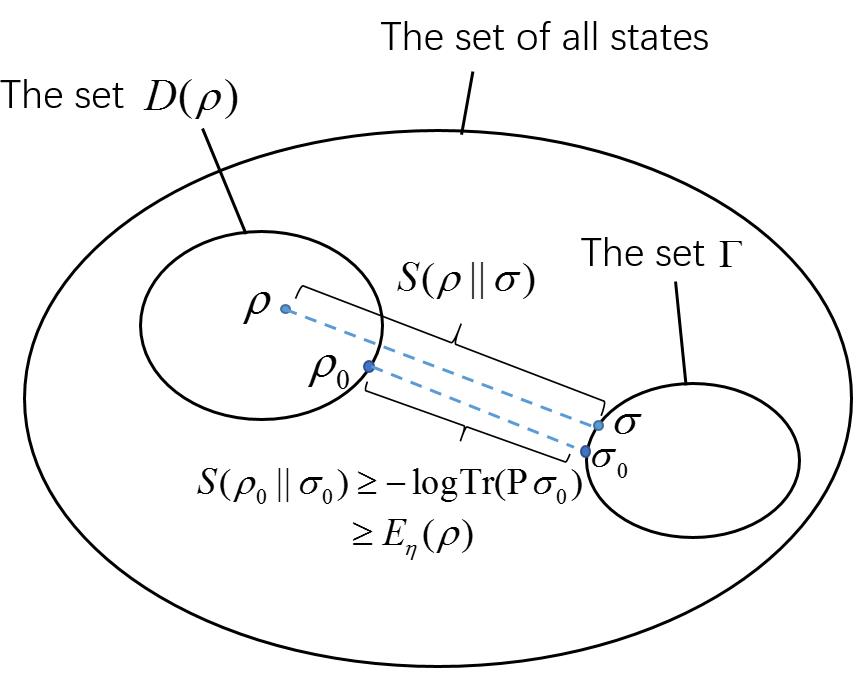}
 \caption{The PPT-relative entropy of entanglement is defined as the smallest quantum relative entropy  from  $\rho$ to the state $\sigma$ taken from the set of PPT states $\G$. Assume that $\rho_0$ and $\sigma_0$ give the smallest quantum relative entropy from  $D(\rho)$ to $\G$. It is clear that $E_{R}(\rho)=S(\rho||\sigma)\ge S(\rho_0||\sigma_0)$ and we show $S(\rho_0||\sigma_0)\ge -\log \tr P\sigma_0 \ge E_\eta(\rho)$ in Proposition \ref{E eta bound ER}, where $P$ is the projection onto the support of $\rho$.  This lower bound $E_\eta(\rho)$ is powerful since it still works in the asymptotic setting due to its additivity under tensor product.
 }
\end{figure}

Then applying properties of quantum relative entropy, we can further relax the problem to minimizing $-\log \tr P\sigma$ over all PPT states $\sigma$, where $P$ is the projection onto $\supp(\rho)$. Noting that this is SDP-computable, we can use SDP techniques to obtain the following bound
\begin{equation}\begin{split}\label{E_eta}
E_\eta(\rho)=\max  -\log\|Y^{T_B} \|_{\infty}, \text{ s.t. } -Y\le P^{T_B}\le Y.
\end{split}\end{equation}
Interestingly, $E_\eta(\cdot)$ is additive under tensor product, i.e.,  
$$E_\eta(\rho_1\ox\rho_2)=E_\eta(\rho_1)+E_\eta(\rho_2),$$
so we can overcome the difficulty of estimating  the regularised relative entropy of entanglement.  
The additivity of  $E_\eta(\cdot)$ can be proved by utilizing the duality theory of SDP \cite{Vandenberghe1996,Watrous2011b}. The detailed proof can be found in the supplementary material. This $E_\eta$ can be efficiently computed since SDP can be solved by efficient algorithms~\cite{Alizadeh1995} and it can also be implemented via CVX~\cite{Grant2008}  and QETLAB~\cite{NathanielJohnston2016}. In particular, $E_\eta$ becomes a computable lower bound for the entanglement cost under LOCC (or PPT) operations.

\begin{proposition}\label{E eta bound ER}
For any state $\rho$,
\begin{equation}
 E_{R}^{\infty}(\rho) \ge E_\eta(\rho).
\end{equation}
Consequently, $E_{C,LOCC}(\rho)\ge E_C(\rho) \ge E_\eta(\rho)$.
\end{proposition}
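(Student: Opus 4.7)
The plan is to establish the single-copy bound $E_{R,PPT}(\rho)\ge E_\eta(\rho)$ and then lift it to the regularized quantity using the additivity of $E_\eta$ under tensor product (which is promised in the appendix). I would carry this out in three steps: (i) relax $E_{R,PPT}(\rho)$ to the quantity $-\log\max_{\sigma\in\G}\tr(P\sigma)$ by invoking data-processing under a coarse projective measurement; (ii) bound that quantity below by $E_\eta(\rho)$ via a partial-transpose manipulation that exploits the feasible constraints in (\ref{E_eta}); (iii) invoke additivity to pass to the asymptotic regime.

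For step (i), since $\rho$ is supported in the range of the projection $P$, the two-outcome projective measurement $\{P,\1-P\}$ sends $\rho$ to the deterministic classical distribution $(1,0)$ and sends any PPT state $\sigma$ to $(\tr(P\sigma),\tr((\1-P)\sigma))$. Monotonicity of quantum relative entropy under this channel yields $S(\rho\|\sigma)\ge -\log\tr(P\sigma)$, and minimizing the right-hand side over $\sigma\in\G$ gives
\begin{equation}\label{eq:sketch-step1}
E_{R,PPT}(\rho)\;\ge\;-\log\max_{\sigma\in\G}\tr(P\sigma).
\end{equation}

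For step (ii), I would use the partial-transpose identity $\tr(AB^{T_B})=\tr(A^{T_B}B)$, which gives $\tr(P\sigma)=\tr(P^{T_B}\sigma^{T_B})$. For any feasible $Y$ in (\ref{E_eta}) and any PPT state $\sigma$ (so $\sigma^{T_B}\ge 0$), the constraint $Y\ge P^{T_B}$ together with $Y^{T_B}\le\|Y^{T_B}\|_\infty\1$ yields
\begin{equation}
\tr(P^{T_B}\sigma^{T_B})\;\le\;\tr(Y\sigma^{T_B})\;=\;\tr(Y^{T_B}\sigma)\;\le\;\|Y^{T_B}\|_{\infty}.
\end{equation}
Taking $-\log$ and maximizing over feasible $Y$ gives $-\log\max_{\sigma\in\G}\tr(P\sigma)\ge E_\eta(\rho)$, which combined with (\ref{eq:sketch-step1}) proves the single-copy bound $E_{R,PPT}(\rho)\ge E_\eta(\rho)$.

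Finally, applying the single-copy inequality to $\rho^{\ox n}$ and invoking additivity $E_\eta(\rho^{\ox n})=nE_\eta(\rho)$ gives $\tfrac{1}{n}E_{R,PPT}(\rho^{\ox n})\ge E_\eta(\rho)$; letting $n\to\infty$ yields the proposition. The main obstacle in this program sits entirely in the additivity of $E_\eta$, which is nontrivial and is handled separately in the appendix via SDP duality; the single-copy chain itself uses only elementary data-processing and the defining feasibility constraints of $E_\eta$.
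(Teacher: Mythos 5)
Your proof is correct and follows essentially the same route as the paper: relax $E_{R,PPT}(\rho)$ to $-\log\max_{\sigma\in\G}\tr(P\sigma)$ via monotonicity of relative entropy, bound that below by $E_\eta(\rho)$ using the feasibility constraints of the SDP, and lift to the regularization via additivity. The only cosmetic differences are that the paper uses the pinching $\tau\mapsto P\tau P+(\1-P)\tau(\1-P)$ where you use the equivalent two-outcome measurement $\{P,\1-P\}$, and the paper cites weak SDP duality where you write out the same chain $\tr(P^{T_B}\sigma^{T_B})\le\tr(Y^{T_B}\sigma)\le\|Y^{T_B}\|_\infty$ explicitly.
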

\begin{proof}
Firstly, let us introduce a CPTP map by $\cN(\tau)=P\tau P+(\1-P)\tau(\1-P)$. Then  
for $\rho_0\in D(\rho)$ and $\sigma_0\in \G$, we have that
\begin{equation}\begin{split}
S(\rho_0||\sigma_0)&\ge S(\cN(\rho_0)||\cN(\sigma_0))\\
                &=S(\rho_0||{P\sigma_0 P}/{\tr P\sigma_0 P})-\log\tr P\sigma_0\\
                &\ge -\log\tr P\sigma_0,
\end{split}\end{equation}
where the first inequality is from the monotonicity of quantum relative entropy  \cite{Lindblad1975,Uhlmann1977}  and the second inequality
 is due to the non-negativity of quantum relative entropy.
Therefore,
$$\min_{\sigma\in \G} S(\rho||\sigma)\ge \min_{\rho_0\in D(\rho), \sigma_0\in \G} S(\rho_0||\sigma_0) \ge \min_{\sigma_0\in \G} -\log\tr P\sigma_0.$$
This step transforms the problem to SDP problems and it can also be proved via the min-relative entropy in \cite{Datta2009}.

Secondly, utilizing the weak duality of SDP~\cite{Vandenberghe1996} (see the supplementary material for details), we have that 
\begin{align*}
\max_{\sigma_0\in \G} \tr P\sigma_0 &\le \min t \text{ s.t. } Y^{T_B}\le t\1, P^{T_B}\le Y\\
&\le \min t \text{ s.t.} -t\1\le Y^{T_B}\le t\1,  -Y\le P^{T_B}\le Y\\
& = \min \|Y^{T_B} \|_{\infty}, \text{ s.t. } -Y\le P^{T_B}\le Y.
\end{align*}
Thus, 
$$E_{R}(\rho) \ge -\log\max_{\sigma_0\in \G} \tr P\sigma_0 \ge E_\eta(\rho).$$

Finally, noting that $E_\eta(\rho)$ is additive, we have that 
\begin{align*}
 E_{R}^{\infty}(\rho)&=\inf_{n\geq 1}E_{R}(\rho^{\ox n})/n\\
 &\ge\inf_{n\geq 1}  E_\eta(\rho^{\ox n})/n =E_\eta(\rho).
 \end{align*}
 By Eq. (\ref{ED REE EC}), we have  $E_C(\rho) \ge E_\eta(\rho)$.
\end{proof}

\textbf{PPT-entanglement cost of $\rho_v$:}
Applying the lower bound $E_\eta(\rho)$, we are now ready to show that the PPT-entanglement cost of $\rho_v$ is still one ebit.
\begin{proposition}\label{PPT cost}
For state $\rho_v$,
$E_{C}(\rho_v)=E_{R}^{\infty}(\rho_v)=1$.
\end{proposition}
\begin{proof}
Firstly, suppose that $Q=\proj{01}+\proj{10}+\proj{02}+\proj{20}$ and we can prove that $E_\eta(\rho_v)\le E_{R}^{\infty}(\rho_v)\le {1}$ by choosing a PPT state
$\tau=Q/4$ such that $S(\rho_v || \tau)=1$.  

Secondly, we are going to prove $E_\eta(\rho_v)\ge 1$.
To see this, suppose that
\begin{align*}
Y=&\frac{1}{2}(Q+\proj{00}
+(\ket {11}+\ket{22})(\bra {11}+\bra{22})).
\end{align*}
Noting that $$Y-P_v^{T_B}=\frac{1}{2}(\ket {00}+\ket{11}+\ket{22})(\bra {00}+\bra{11}+\bra{22}),$$ it is clear that 
$P^{T_B}\le Y$. Moreover,
\begin{align*}
Y+P_v^{T_B}=Q+\frac{1}{2}(\ket {00}-\ket{11}-\ket{22})(\bra {00}-\bra{11}-\bra{22}),
\end{align*}
which means that 
$P_v^{T_B}\ge -Y$.

Then $Y$ is a feasible solution to the SDP (\ref{E_eta}) of $E_\eta(\rho_v)$.
Thus, 
$$E_\eta(\rho_v)\ge -\log\|Y^{T_B}\|_{\infty}=-\log 1/2=1,$$ and
 we can conclude that $E_\eta(\rho_v)=E_{R}^{\infty}(\rho_v)=1$.

Finally, it is obvious that a standard Bell pair is sufficiently to prepare an exact copy of $\rho_v$ by LOCC. Combining with the above bounds, we have that $1=E_{\eta}(\rho_v)\leq E_{R}^{\infty}(\rho_v)\leq E_{C}(\rho_v)\leq E_{C,LOCC}(\rho_v)\leq 1$.
\end{proof}

It is worth pointing out that our approach to evaluating the PPT-entanglement cost is to combine the lower bound $E_{\eta}$ and the upper bound $E_{C, LOCC}$.
This result provides a new proof of the entanglement cost of the rank-two $3\ox3$ antisymmetric state in Ref.~\cite{Yura2003}. Moreover, our result is stronger as it shows that the entanglement cost under PPT operations of this state is still one ebit.

\textbf{PPT-distillable entanglement of $\rho_v$:}
We can evaluate the PPT-distillable entanglement of $\rho_v$ by the upper bound of Rains' bound and the SDP characterization of the one-copy deterministic PPT-distillable entanglement~\cite{Wang2016}.
\begin{proposition}\label{PPT distill}
\begin{equation}
E_{D}(\rho_v)=R^{\infty}(\rho_v)=\log (1+{1}/{\sqrt 2}).
\end{equation}
\end{proposition}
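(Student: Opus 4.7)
The plan is to sandwich the target value $\log(1+1/\sqrt 2)$ using the chain $E_{0,D,PPT}^{(1)}(\rho_v)\le E_{D,PPT}(\rho_v)\le R^\infty(\rho_v)\le R(\rho_v)$ recorded in the excerpt. The proof therefore reduces to two explicit constructions: a Rains-feasible $\sigma$ certifying $R(\rho_v)\le\log(1+1/\sqrt 2)$ and a feasible point for the SDP~\eqref{prime PPT one-copy} certifying $E_{0,D,PPT}^{(1)}(\rho_v)\ge\log(1+1/\sqrt 2)$; the sandwich then forces all four quantities to coincide.

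For the upper bound I would first compute $\|\rho_v^{T_B}\|_1=1+1/\sqrt 2$ by block-diagonalising: $\rho_v^{T_B}$ acts as $\tfrac{1}{4}\1$ on the four-dimensional subspace spanned by $\{\ket{01},\ket{10},\ket{02},\ket{20}\}$, while on the subspace spanned by $\{\ket{00},\ket{11},\ket{22}\}$ it has eigenvalues $\pm\sqrt 2/4$ and $0$ arising from a purely off-diagonal $3\times 3$ block. Setting $\sigma=\rho_v/(1+1/\sqrt 2)$ then gives a Rains-feasible state (its partial transpose has trace-norm $1$), and because $\sigma$ is merely a scalar multiple of $\rho_v$, one has $S(\rho_v\|\sigma)=\log(1+1/\sqrt 2)$, so $R(\rho_v)\le\log(1+1/\sqrt 2)$.

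For the lower bound the naive choice $R=P_{AB}$ only yields $-\log\|P_{AB}^{T_B}\|_\infty=1/2$, because the off-diagonal contributions $-\tfrac{1}{2}\ketbra{00}{ii}$ inherited from the antisymmetric vectors $\ket{v_i}$ produce an eigenvalue $\sqrt 2/2$ on the diagonal block. The key trick is to add the symmetric partners $\ket{w_i}=(\ket{0i}+\ket{i0})/\sqrt 2$ for $i=1,2$: each $\ket{w_i}$ is orthogonal to $\supp(\rho_v)$ (so the feasibility condition $R\ge P_{AB}$ is automatic), and $\proj{w_i}^{T_B}$ contributes $+\tfrac{1}{2}\ketbra{00}{ii}+\tfrac{1}{2}\ketbra{ii}{00}$ at precisely the troublesome positions. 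The one-parameter family $R=P_{AB}+s(\proj{w_1}+\proj{w_2})$ with $s\in[0,1]$ is therefore feasible, and a direct computation of the three nontrivial blocks of $R^{T_B}$ yields
\[\|R^{T_B}\|_\infty=\max\left\{\frac{1+s}{2},\;\frac{(1-s)\sqrt 2}{2}\right\}.\]
The two branches cross at $s=3-2\sqrt 2$ with common value $2-\sqrt 2=1/(1+1/\sqrt 2)$, producing $E_{0,D,PPT}^{(1)}(\rho_v)\ge\log(1+1/\sqrt 2)$ and closing the sandwich.

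The main obstacle is identifying this ansatz in the lower-bound step: one must recognise that mixing in the symmetric partners $\proj{w_i}$ both preserves feasibility (by orthogonality to $\supp(\rho_v)$) and simultaneously cancels exactly the off-diagonal entries of $P_{AB}^{T_B}$ on the $\{\ket{00},\ket{11},\ket{22}\}$ block that would otherwise prevent $\|R^{T_B}\|_\infty$ from dropping below $\sqrt 2/2$. Once this observation is in hand, the remaining optimisation over $s$ is a one-variable calculation, and no strong-duality or regularisation argument is needed.
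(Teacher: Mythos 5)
Your proposal is correct and follows essentially the same route as the paper: the upper bound comes from $\|\rho_v^{T_B}\|_1=1+1/\sqrt 2$ (your explicit Rains-feasible $\sigma=\rho_v/\|\rho_v^{T_B}\|_1$ is just the standard proof of $R\le E_N$, which is what the paper invokes), and your optimal operator $R=P_{AB}+(3-2\sqrt 2)(\proj{w_1}+\proj{w_2})$ is exactly the paper's feasible solution $R_{AB}=(3-2\sqrt 2)(\proj{r_1}+\proj{r_2})+P_{AB}$, merely rediscovered via a one-parameter optimisation. All computations check out.
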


\begin{proof}
Firstly, we need to introduce upper and lower SDP bounds to evaluate the distillable entanglement and the regularized Rains' bound.  The logarithmic negativity~\cite{Vidal2002,Plenio2005b}  is an upper bound on PPT-distillable entanglement, i.e., 
$E_N(\rho)=\log \|\rho^{T_B}\|_1$.

The following one-copy deterministic PPT-distillable entanglement was also obtained in Ref.~\cite{Wang2016,Fang2017}, 
\begin{equation}\label{prime PPT one-copy}
\begin{split}
E_{0,D}^{(1)}(\rho)= \max& -\log_2 \|R^{T_B}\|_\infty, \\
\text{ s.t. }  &  P\le R\le \1.
\end{split}
\end{equation}
where $P$ is the projection onto the support of $\rho$.  Clearly $E_{0,D}^{(1)}(\rho)$ is efficiently computable by SDP, and for a general bipartite state $\rho$ we have 
$$E_{0,D}^{(1)}(\rho)\le E_{D}(\rho)\leq R^{\infty}(\rho)\leq E_N(\rho),$$
which is very helpful to determine the exact values of PPT-distillable entanglement  for some states.

Now it is easy to check that $\|\rho_v^{T_B}\|_1=1+{1}/{\sqrt 2}$. Then,
\begin{equation}\label{ED upper}
R^{\infty}(\rho_v) \le E_N(\rho_v) \le  \log (1+{1}/{\sqrt 2}).
\end{equation}

On the other hand, let
$$R=(3-2\sqrt 2)(\proj{r_1}+\proj{r_2})+P_v$$
with 
$ \ket{r_1}=(\ket{01}+\ket{10})/\sqrt{2} $ and $ \ket{r_2}=(\ket{02}+\ket{20})/\sqrt{2}$.
It is easy to check that $P_v\le R\le \1$, which means that $R$ is a feasible solution to SDP (\ref{prime PPT one-copy}) of  $E_{0,D}^{(1)}(\rho_v)$.  
Therefore,
\begin{equation}\label{ED lower}
E_{0,D}^{(1)}(\rho_v) \ge -\log \| R^{T_B}\|_{\infty} =\log(1+{1}/{\sqrt 2}).
\end{equation}

Finally,  combining Eq. (\ref{ED upper}) and Eq. (\ref{ED lower}), we have that $E_{D}(\rho_v)=R^{\infty}(\rho_v)=\log (1+{1}/{\sqrt 2})$.
\end{proof}

\textbf{General irreversibility under PPT operations:} We have shown the irreversibility of the  entanglement distillation of $\rho_v$ under PPT operations. One can use similar technique to prove this irreversibility for any $\rho$ with spectral decomposition 
\begin{equation}\rho=p\proj{u_1}+(1-p)\proj{u_2} \ (0<p<1),\end{equation}
 where $\ket {u_1}=(\ket{01}-\ket{10})/\sqrt2, \ket {u_2}=(\ket{ab}-\ket{ba})/\sqrt2$ and $\braket{u_1}{u_2}=0$. Interestingly,  it holds that $ E_{D}(\rho)<1=E_{C}(\rho)$. (See the supplemenrary material). More generally, we can provide a sufficient condition for the irreversibility under PPT operations and construct a general class of such states. For this purpose, we consider an improved version of logarithmic negativity introduced in Ref.~\cite{Wang2016}, namely 
$$E_W(\rho)= \min  \log \|X^{T_B} \|_1, \mbox{s.t.} ~~X\ge\rho.$$ 
It was shown in Ref.~\cite{Wang2016} that  
$E_{D}(\rho)\le E_W(\rho)\le E_N(\rho)$,
and the second equality can be strict. It is straightforward to see that
if $E_W(\rho) < E_\eta(\rho)$, then $ E_{D}(\rho) < E_{C}(\rho). $

Indeed, we can obtain a more specific condition if we use logarithmic negativity $E_N$ instead of $E_W$. That is, for a bipartite state $\rho$, if there is a Hermitian matrix $Y$ such that $P_{AB}^{T_B} \pm Y\ge 0$ and $\|\rho^{T_B}\|_1 < \|Y^{T_B}\|_{\infty}^{-1}$, we have $E_{D}(\rho)<E_{C}(\rho)$.

We further show the irreversibility in asymptotic manipulations of entanglement under PPT operations  by a class of $3\otimes 3$ states in defined by 
$\rho^{(\a)}=( \proj{\psi_1} +\proj{\psi_2})/2$,
 where $\ket{\psi_1}=\sqrt{\a}\ket{01}-\sqrt{1-\a}\ket{10}$ and  $\ket{\psi_2}=\sqrt{\a}\ket{02}-\sqrt{1-\a}\ket{20}$  with $0.42\le\a \le 0.5$. Then the projection onto the range of $\rho^{(\a)}$ is $P_{AB}=\proj{\psi_1} +\proj{\psi_2}$.
One can easily calculate that
$$E_W(\rho^{(\a)})\le \log\|(\rho^{(\a)})^{T_B}\|_1=\log(1+\sqrt{2\a(1-\a)}).$$
We then construct a feasible solution to the dual SDP (\ref{E_eta}) of $E_\eta(\rho^{(\a)})$, i.e.,  
$Y=\a(\proj{01}+\proj{02})+(1-\a)(\proj{10}+\proj{20})+\sqrt{\a(1-\a)}(\proj{00}+\proj{11}+\proj{22}+\ketbra{11}{22}+\ketbra{22}{11})$. It can be checked that $-Y\le P_{AB}^{T_B}\le Y$ and $\| Y^{T_B}\|_{\infty}\le 1-\a$. Thus,  $E_{\eta}(\rho^{(\a)})\ge -\log(1-\a)$.

When $0.42\le \a \le 0.5$, it is easy to check that  $-\log(1-\a)>\log(1+\sqrt{2\a(1-\a)})$. Therefore,
$E_{D}(\rho^{(\a)})\le E_W(\rho^{(\a)}) < E_\eta(\rho^{(\a)})\le E_{C}(\rho^{(\a)})$.

\textbf{Discussions}
We prove that distillable entanglement can be strictly smaller than entanglement cost under PPT operations,  which implies the irreversibility of asymptotic entanglement manipulation under PPT operations. In particular, we prove that the PPT-distillable entanglement of any rank-two $3\ox3$ antisymmetric state is strictly smaller than its PPT-entanglement cost.  A byproduct is that there is a gap between the regularized Rains' bound and the asymptotic PPT-relative entropy of entanglement.  Consequently, there is no unique entanglement measure in general for the asymptotic entanglement manipulation under PPT operations, which indicates that entanglement theory under PPT operations differs from thermodynamics. We also obtain an SDP-computable lower bound for the entanglement cost under both LOCC and PPT operations.
Finally, we show an efficiently computable sufficient condition for the irreversibility of entanglement distillation of by LOCC (or PPT) operations. 

However, the lower bound $E_\eta$ for entanglement cost is in general not tight and could be sometimes smaller than distillable entanglement. To see this, consider the state $\sigma_a=P_a/3$ with $P_a$ the projection over the $3\ox 3$ antisymmetric subspace. We have $E_\eta(\sigma_a)=\log 3/2< E_{D}(\sigma_a)=E_{C}(\sigma_a)=\log5/3$~\cite{Audenaert2003}. How to further refine the lower bound $E_\eta$ remains an interesting problem.  

RD would like to thank Andreas Winter for inspirational discussions on the potential gap between regularized Rains' bound and asymptotic REE. The authors also thank Jonathan Oppenheim  and Martin Plenio for their helpful comments. This work was partly supported by the Australian Research Council under Grant Nos. DP120103776 and FT120100449.

%
\newpage
\appendix
\widetext
\begin{center} \large{\textbf{Supplemental Material}} \end{center}

\section{The additivity of $E_\eta(\rho)$ under tensor product}
To see the additivity of  $E_\eta(\rho)$,  we reformulate it as $E_\eta(\rho)=-\log \eta(\rho)$, where 
\begin{equation}\label{dual  eta}\begin{split}
\eta(\rho)= \min &\ t \\
\text{ s.t. }&  -Y \le P^{T_B}\le Y,\\
& -t\1\le Y^{T_B}\le t\1,
\end{split}\end{equation}
where $P$ is the projection onto $\supp(\rho)$.

The dual SDP of $\eta(\rho)$ can be derived by  Lagrange multiplier method. It is given by
\begin{equation}\label{prime  eta}\begin{split}
\eta(\rho)= \max& \tr P(V-F)^{T_B},\\
 \text{ s.t. }&  V+F\le (W-X)^{T_B}, \\
& \tr(W+X)\le 1,\\
 & V,F,W,X\ge 0.
\end{split}\end{equation}
The optimal values of the primal and the dual SDPs above coincide by strong duality.  The details about strong duality theorem can be found in \cite{Watrous2011b}.

\begin{proposition}
For any two bipartite states $\rho_1$ and $\rho_2$, we have that  
 $$E_\eta(\rho_1\ox\rho_2)=E_\eta(\rho_1)+E_\eta(\rho_2).$$
\end{proposition}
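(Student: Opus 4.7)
The plan is to prove the multiplicative version $\eta(\rho_1\ox\rho_2)=\eta(\rho_1)\eta(\rho_2)$ by two independent inequalities, each obtained from the strong duality already given in the excerpt: sub-multiplicativity from the primal SDP \eqref{dual eta}, and super-multiplicativity from the dual SDP \eqref{prime eta}. Since $E_\eta = -\log\eta$, this gives the desired additivity.

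For the direction $\eta(\rho_1\ox\rho_2)\le\eta(\rho_1)\eta(\rho_2)$ (i.e.\ $E_\eta(\rho_1\ox\rho_2)\ge E_\eta(\rho_1)+E_\eta(\rho_2)$), I would take optimal primal solutions $(Y_i,t_i)$ with $t_i=\eta(\rho_i)$ and propose $Y=Y_1\ox Y_2$, $t=t_1 t_2$ for $\eta(\rho_1\ox\rho_2)$, noting that the support projection of $\rho_1\ox\rho_2$ is $P_1\ox P_2$. The nontrivial check is $-Y_1\ox Y_2\le (P_1\ox P_2)^{T_B}\le Y_1\ox Y_2$. This follows from tensoring the four PSD inequalities $Y_i\pm P_i^{T_B}\ge0$ and then taking suitable sums: for instance, $(Y_1+P_1^{T_B})\ox(Y_2+P_2^{T_B})+(Y_1-P_1^{T_B})\ox(Y_2-P_2^{T_B})\ge 0$ yields $Y_1\ox Y_2+P_1^{T_B}\ox P_2^{T_B}\ge 0$, and the cross combination yields $Y_1\ox Y_2-P_1^{T_B}\ox P_2^{T_B}\ge 0$. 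The analogous argument applied to $-t_i\1\le Y_i^{T_B}\le t_i\1$ gives the second constraint, using $(Y_1\ox Y_2)^{T_B}=Y_1^{T_B}\ox Y_2^{T_B}$.

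For the reverse direction $\eta(\rho_1\ox\rho_2)\ge \eta(\rho_1)\eta(\rho_2)$ (i.e.\ $E_\eta(\rho_1\ox\rho_2)\le E_\eta(\rho_1)+E_\eta(\rho_2)$), I would take optimal dual solutions $(V_i,F_i,W_i,X_i)$ and build a product ansatz that respects the $\pm$ structure of the dual:
\begin{align*}
V &= V_1\ox V_2+F_1\ox F_2, & F &= V_1\ox F_2+F_1\ox V_2,\\
W &= W_1\ox W_2+X_1\ox X_2, & X &= W_1\ox X_2+X_1\ox W_2.
\end{align*}
All four operators are PSD by construction, and one checks directly
$V-F=(V_1-F_1)\ox(V_2-F_2)$, $V+F=(V_1+F_1)\ox(V_2+F_2)$, $W-X=(W_1-X_1)\ox(W_2-X_2)$, $W+X=(W_1+X_1)\ox(W_2+X_2)$. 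The trace constraint $\tr(W+X)\le 1$ then multiplies. The key operator inequality $V+F\le(W-X)^{T_B}$ follows by writing $(W_i-X_i)^{T_B}=A_i+B_i$ with $A_i=(W_i-X_i)^{T_B}-(V_i+F_i)\ge 0$ and $B_i=V_i+F_i\ge 0$, and then using $(A_1+B_1)\ox(A_2+B_2)\ge B_1\ox B_2$. The objective value evaluates to $\tr(P_1\ox P_2)[(V_1-F_1)\ox(V_2-F_2)]^{T_B}=\eta(\rho_1)\eta(\rho_2)$, so this is a feasible dual point with that objective.

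The main obstacle is choosing the right dual ansatz: a naive $V=V_1\ox V_2$, $F=F_1\ox F_2$ would not recover the product $(V_1-F_1)\ox(V_2-F_2)$ in the objective, so one needs the $\pm$-symmetrised construction above, mirroring how one handles the partial-transpose analog of the diamond norm. Once that ansatz is in hand, every verification reduces to tensoring PSD operators and using $(\cdot)^{T_B}$ distributivity, both of which are routine.
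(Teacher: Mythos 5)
Your proposal is correct and follows essentially the same route as the paper's own proof: the same tensor-product primal point $\{t_1t_2,\,Y_1\ox Y_2\}$ verified by the half-sum decomposition of $Y_1\ox Y_2\pm P_1^{T_B}\ox P_2^{T_{B'}}$, and exactly the same $\pm$-symmetrised dual ansatz $V=V_1\ox V_2+F_1\ox F_2$, $F=V_1\ox F_2+F_1\ox V_2$, $W=W_1\ox W_2+X_1\ox X_2$, $X=W_1\ox X_2+X_1\ox W_2$. Your explicit justification of $V+F\le (W-X)^{T_B}$ via $(A_1+B_1)\ox(A_2+B_2)\ge B_1\ox B_2$ only spells out a step the paper states directly.
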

\begin{proof}
On one hand, suppose that the optimal solution to SDP (\ref{dual eta}) of $\eta(\rho_1)$ and $\eta(\rho_2)$ are $ \{t_1,Y_1\}$ and $\{t_2,Y_2\}$, respectively. 
It is easy to see that  
\begin{align*}
 Y_1\ox Y_2+P_1^{T_B}\ox P_2^{T_{B'}}=\frac{1}{2}[(Y_1+P_1^{T_B})\ox (Y_2+P_2^{T_{B'}})+(Y_1-P_1^{T_B})\ox (Y_2-P_2^{T_{B'}})]\ge 0,\\
 Y_1\ox Y_2-P_1^{T_B}\ox P_2^{T_{B'}}=\frac{1}{2}[(Y_1+P_1^{T_B})\ox (Y_2-P_2^{T_{B'}})+(Y_1-P_1^{T_B})\ox (Y_2+P_2^{T_{B'}})]\ge 0.\\
\end{align*}
Then,  we have that $-Y_1\ox Y_2 \le P_1^{T_B}\ox P_2^{T_{B'}}\le Y_1\ox Y_2$. Moreover,
 $$\|Y_1^{T_B}\ox Y_2^{T_{B'}} \|_{\infty}\le \|Y_1^{T_B}\|_{\infty}\|Y_2^{T_{B'}}\|_{\infty}\le t_1t_2,$$
 which means that $-t_1t_2\1 \le Y_1^{T_B}\ox Y_2^{T_{B'}}\le t_1t_2\1$.
Therefore, $\{t_1t_2, Y_1\ox Y_2\}$ is a feasible solution to the SDP (\ref{dual eta}) of $\eta(\rho_1\ox \rho_2)$, which means that \begin{equation}\label{eta le}
\eta(\rho_1\ox\rho_2)\le t_1t_2=\eta(\rho_1)\eta(\rho_2).
\end{equation}

On the other hand, suppose that the optimal solutions to SDP (\ref{prime eta}) of $\eta(\rho_1)$ and $\eta(\rho_2)$ are $\{V_1, F_1,W_1,X_1\}$ and $\{V_2, F_2,W_2,X_2\}$, respectively. 
Assume that
\begin{align*}
V&=V_1\ox V_2+F_1\ox F_2, 
 F=V_1\ox F_2+F_1\ox V_2,\\
W&=W_1\ox W_2+X_1\ox X_2, 
 X=W_1\ox X_2+X_1\ox W_2.
\end{align*}
It is easy to see that
$$V+F=(V_1+F_1)\ox(V_2+F_2)
\le (W_1-X_1)^{T_B}\ox  (W_2-X_2)^{T_{B'}}
=(W-X)^{T_{BB'}}$$
and $\tr (W+X)=\tr(W_1+X_1)\ox(W_2+X_2)\le 1$. Thus, $\{V,F,W,X\}$ is a feasible solution to the SDP (\ref{prime eta}) of $\eta(\rho_1\ox \rho_2)$. This means that
\begin{equation}\label{eta ge}
 \eta(\rho_1\ox \rho_2)\ge \tr (P_1\ox P_2)(V-F)^{T_{BB'}}= \tr (P_1\ox P_2)((V_1-F_1)^{T_B}\ox(V_2-F_2)^{T_{B'}})
 = \eta(\rho_1)\eta(\rho_2).
\end{equation}

Hence, combining Eq. (\ref{eta le}) and Eq. (\ref{eta ge}), it is clear that $ \eta(\rho_1\ox \rho_2)= \eta(\rho_1)\eta(\rho_2)$, which means that $$E_\eta(\rho_1\ox\rho_2)=E_\eta(\rho_1)+E_\eta(\rho_2).$$
\end{proof}

\section{Proof of an inequality in Proposition \ref{E eta bound ER} using weak duality of SDP}
 In the following, we will utilize the weak duality of SDP to show an important inequality in Proposition \ref{E eta bound ER}, i.e.,
\begin{align}\label{ineq prop1}
\max_{\sigma_0\in \G} \tr P\sigma_0 &\le \min t \  \text{ s.t. } Y^{T_B}\le t\1, P^{T_B}\le Y.
\end{align}
To see this, we note that $\max_{\sigma_0\in \G} \tr P\sigma_0$ is the prime SDP and its dual can be derived by  Lagrange multiplier method. 
To be specific, we associate the operator  $G\ge 0$ to the constraint $\sigma_0^{T_B}\ge 0$ and a real multiplier $t$ to the constraint that $\tr \sigma_0=1$. The resulting Lagrangian is
\begin{align*}
\tr P \sigma_0
+ t(1-\tr \sigma_0)
+ \tr G\sigma_0^{T_B}
= \ t+\tr \sigma_0 (P+G^{T_B}-t\1).
\end{align*}
The dual SDP is to minimise $t$ subject to
\begin{equation}\label{cons appendix}
P+G^{T_B}-t\1\le 0, G\ge 0.
\end{equation}
Let $Y=P^{T_B}+G$, then the dual SDP is to minimise $t$ subject to
\begin{equation} 
Y^{T_B}\le t\1,  P^{T_B}\le Y.
\end{equation}
Therefore,  the prime and dual SDPs are as follows.
\begin{alignat}{2}
& \textbf{(Primal)} \quad && \max \left\{\tr P\sigma_0:  \sigma_0\ge 0, \sigma_0^{T_B}\ge 0, \tr\sigma_0=1 \right\}, \label{EW primal}\\
& \ \ \textbf{(Dual)} && \min \left\{ t :  Y^{T_B}\le t\1, P^{T_B}\le Y\right\}. \label{EW dual}
\end{alignat}
Finally, the inequality (\ref{ineq prop1}) follows from the weak duality theorem, 
which states that the value of the dual SDP attained at any dual feasible solution is at least the value of the primal SDP at any primal feasible solution. Interested readers can consult \cite{Vandenberghe1996,Watrous2011b} for more details. 

\section{Irreversibility for any rank-two $3\otimes 3$ antisymmetric state}
\begin{proposition}
For any state $\rho$ with spectral decomposition $$\rho=p\proj{u_1}+(1-p)\proj{u_2} \ (0<p<1),$$ where $$\ket {u_1}=(\ket{01}-\ket{10})/\sqrt2, \ket {u_2}=(\ket{ab}-\ket{ba})/\sqrt2,  $$
 it holds that $$ E_{D}(\rho)<1=E_{C}(\rho).$$
\end{proposition}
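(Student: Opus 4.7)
My plan is to mirror the three-step proof of the main theorem, exploiting the fact that $E_\eta$ depends only on the support projection of $\rho$.

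First, I would exploit local unitary invariance to put $\rho$ in canonical form. In $3\ox 3$ the constraint $\{a,b\}\ne\{0,1\}$ leaves the two subcases $\{a,b\}=\{0,2\}$ and $\{a,b\}=\{1,2\}$; in the latter, conjugation by $V\ox V$ with $V$ the permutation exchanging $\ket{0}$ and $\ket{1}$ preserves the antisymmetric subspace, sends $(\ket{12}-\ket{21})/\sqrt 2$ to $(\ket{02}-\ket{20})/\sqrt 2$, and leaves $\ket{u_1}$ invariant up to an overall sign. So we may assume $\ket{u_2}=(\ket{02}-\ket{20})/\sqrt 2$, in which case the support projection $P_{AB}=\proj{u_1}+\proj{u_2}$ coincides exactly with the one used for $\rho_v$ in Proposition \ref{PPT cost}, even though the weights are now $p$ and $1-p$.

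Second, I would establish $E_{C,PPT}(\rho)=1$ by the same two-sided squeeze used for $\rho_v$. The SDP defining $E_\eta$ in (\ref{E_eta}) depends only on $P_{AB}$, so we inherit $E_\eta(\rho)=E_\eta(\rho_v)=1$, giving $E_{C,PPT}(\rho)\ge E_{R,PPT}^{\infty}(\rho)\ge 1$ via Proposition \ref{E eta bound ER} and the standard chain of inequalities. For the matching upper bound I would note that any unit vector in $\supp(\rho)$ reads $(\ket{0}\ket{w}-\ket{w}\ket{0})/\sqrt 2$ with $\ket{w}=\alpha\ket{1}+\beta\ket{2}$ orthogonal to $\ket{0}$, hence is a $2\ox 2$ maximally entangled state on $\text{span}\{\ket{0},\ket{w}\}$ for each party, carrying entanglement entropy exactly $1$. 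Thus $E_F(\rho)=1$ and $E_{C,PPT}(\rho)\le E_C(\rho)\le E_F(\rho)=1$.

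Third, I would control $E_{D,PPT}(\rho)$ through the logarithmic negativity. A direct expansion of $\rho^{T_B}$ in the computational basis splits it into a nonnegative diagonal piece with entries $p/2,p/2,(1-p)/2,(1-p)/2$ on $\ket{01},\ket{10},\ket{02},\ket{20}$ (contributing $1$ to the trace norm), plus a rank-two symmetric $3\times 3$ block on $\text{span}\{\ket{00},\ket{11},\ket{22}\}$ with zero diagonal, coupling $\ket{00}$ to $\ket{11}$ by $-p/2$ and to $\ket{22}$ by $-(1-p)/2$. Its two nonzero eigenvalues are $\pm\tfrac{1}{2}\sqrt{p^2+(1-p)^2}$, so $\|\rho^{T_B}\|_1=1+\sqrt{p^2+(1-p)^2}<2$ for all $0<p<1$, yielding $E_{D,PPT}(\rho)\le E_N(\rho)<1$ and closing the gap.

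No serious obstacle stands in the way: the canonical-form reduction and the bound $E_F(\rho)=1$ are straightforward, $E_\eta$ is reused verbatim from Proposition \ref{PPT cost}, and the only honest calculation is the rank-two $3\times 3$ eigenvalue problem. The conceptual point is that both $E_\eta$ and the pure-state entanglement in $\supp(\rho)$ are $p$-independent (each sees only the support), whereas $\|\rho^{T_B}\|_1$ genuinely decreases from $2$ the moment $p$ is bounded away from $0$ and $1$.
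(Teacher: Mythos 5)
Your overall route is the same as the paper's: lower-bound the cost via $E_\eta(\rho)\ge 1$ (using only the support projection), upper-bound it via $E_F(\rho)=1$ because every vector in the support is a $2\ox 2$ maximally entangled state, and upper-bound the distillable entanglement via $E_N(\rho)=\log(1+\sqrt{p^2+(1-p)^2})<1$; your trace-norm computation is correct and reproduces Proposition \ref{PPT distill} at $p=1/2$.

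The genuine gap is in your first step. The proposition is not restricted to $\ket a,\ket b$ being computational basis vectors: the paper's own proof writes $\ket a=a_0\ket0+a_1\ket1+a_2\ket2$, $\ket b=b_0\ket0+b_1\ket1+b_2\ket2$, and the orthogonality constraint $a_0b_1-a_1b_0=0$ only forces $\ket{u_2}$ into the continuous family $\bigl((\cos\theta\ket0+\sin\theta\ket1)\ox\ket2-\ket2\ox(\cos\theta\ket0+\sin\theta\ket1)\bigr)/\sqrt2$ with $0\le\theta\le\pi/2$ (the statement is meant to cover \emph{every} rank-two state on the $3\ox3$ antisymmetric subspace, as the appendix title says). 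Your discrete case analysis $\{a,b\}\in\{\{0,2\},\{1,2\}\}$ covers only the endpoints $\theta=0,\pi/2$, so as written the proof misses a one-parameter continuum of states, for which neither the support projection nor $\|\rho^{T_B}\|_1$ is literally the one you computed. The gap is reparable within your own strategy: take $U$ to be the rotation by $-\theta$ in $\mathrm{span}\{\ket0,\ket1\}$ fixing $\ket2$; then $U\ox U$ fixes $\ket{u_1}$ (it is the antisymmetric combination $e_0\wedge e_1$, invariant under a determinant-one rotation of that plane) and sends $\ket{u_2}$ to $(\ket{02}-\ket{20})/\sqrt2$, reducing everything to your canonical case since all quantities involved are local-unitary invariant. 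With that amendment your argument is complete and is arguably cleaner than the paper's, which instead keeps $\theta$ explicit and verifies $\|\rho^{T_B}\|_1<2$ and the feasibility of $Y=P_{AB}^{T_B}+\frac12(\ket{00}+\ket{11}+\ket{22})(\bra{00}+\bra{11}+\bra{22})$ for all $\theta$.
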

\begin{proof}
Suppose that $\ket a=a_0\ket0+a_1\ket1+a_2\ket2$ and $\ket b=b_0\ket0+b_1\ket1+b_2\ket2$. Noting that $\braket{u_1}{u_2}=0$, we have ${a_0}{b_1}-{a_1}{b_0}=0$. Thus, with simple calculation,  it is easy to see that
$$\ket {u_2}=[(a_0\ket0+a_1\ket1)\ox b_2\ket2 +a_2\ket2 \ox (b_0\ket0+b_1\ket1)-(b_0\ket0+b_1\ket1)\ox a_2\ket2-b_2\ket2 \ox (a_0\ket0+a_1\ket1)]/\sqrt2.$$
Then, one can simplify $\ket{u_2}$ to 
$$\ket {u_2}=[(\cos\theta\ket 0+\sin\theta\ket 1)\ox \ket2-\ket2\ox(\cos\theta\ket 0+\sin\theta\ket 1)]/\sqrt2 \ (0\le\theta\le \pi/2),$$
where $\theta$ is determined by $\ket a$ and $\ket b$. We assume that $P_{AB}=\proj{u_1}+\proj{u_2}$. 

It is can be calculated that $\|\rho^{T_B}\|_1<2$ for any $0<p<1$ and $0\le\theta\le \pi/2$, which means that $E_{D}(\rho) <1$.

Moreover, let us choose 
$$Y=P^{T_B}+\frac{1}{2}(\ket {00}+\ket{11}+\ket{22})(\bra {00}+\bra{11}+\bra{22}).$$ 
It is clear that $Y\ge P^{T_B}$ and it can be easily checked that $-Y\le P^{T_B}$. Thus, $Y$ is a feasible solution to the SDP (\ref{dual eta}) of $E_\eta(\rho)$, which means that 
$$E_\eta(\rho)\ge -\log \|Y^{T_B}\|_{\infty} =1.$$

Finally, it is clear that a standard $2\otimes 2$ maximally entangled state (one ebit) such as $1/\sqrt{2}(\ket{00}+\ket{11})$ is sufficiently to prepare an exact copy of $\rho$ via LOCC. Combining with the above lower bounds, we have that $1=E_{\eta}(\rho)\leq E_{R}^{\infty}(\rho)\leq E_{C}(\rho)\leq E_{C,LOCC}(\rho)\leq 1$.
\end{proof}
\end{document}